\title{Bayesian Fusion of Data Partitioned Particle Estimates\thanks{Submitted to the editors October 26, 2020.\funding{This work was performed under the auspices of the U.S. Department of Energy by Lawrence Livermore National Laboratory under Contract DE-AC52-07NA27344. Funding for this work was provided by LLNL Laboratory Directed Research and Development grant 19-SI-004.}}}
\author{Caleb Miller\thanks{Department of Applied Mathematics, CU, Boulder, CO (\email{caleb.miller@colorado.edu, corcoran@colorado.edu}).} \and Michael D. Schneider\thanks{Lawrence Livermore National Lab (\email{schneider42@llnl.gov,bernstein8@llnl.gov}).}\and Jem N. Corcoran\footnotemark[2]\and Jason Bernstein\footnotemark[3]}
\date{October 2020}
\begin{document}

\maketitle

\begin{abstract}
     We present a Bayesian data fusion method to approximate a posterior distribution from an ensemble of particle estimates that only have access to subsets of the data. Our approach relies on approximate probabilistic inference of model parameters through Monte Carlo methods, followed by an update and resample scheme related to multiple importance sampling to combine information from the initial estimates. We show the method is convergent in the particle limit and directly suited to application on multi-sensor data fusion problems by demonstrating efficacy on a multi-sensor Keplerian orbit determination problem and a bearings-only tracking problem.
\end{abstract}

\begin{keywords}
  Monte Carlo Methods, Multiple Importance Sampling, Resampling, Data Fusion
\end{keywords}

\begin{AMS}
   60G35, 62G07, 62F15, 93E10, 93E11  
\end{AMS}

%%%%%%%%%%%%%%%%%%%%%%%%%%%%%%%%%%%%%%%%%%%%%%%%%%%%%%
%%%%%%%%%%%%%%%%%%%%%%%%%%%%%%%%%%%%%%%%%%%%%%%%%%%%%%
\section{Introduction}
%%%%%%%%%%%%%%%%%%%%%%%%%%%%%%%%%%%%%%%%%%%%%%%%%%%%%%
%%%%%%%%%%%%%%%%%%%%%%%%%%%%%%%%%%%%%%%%%%%%%%%%%%%%%%

Data fusion is the process through which different sources of information are combined to form a joint estimate of a process, target, or distribution of interest. There are many methods for data fusion, especially in the areas of multi-sensor measurements and target tracking~\cite{bar2011tracking,mahler2007statistical}. Several data fusion techniques exist to combine analytic descriptions or numerical estimates of probability density functions (PDFs)~\cite{genest1986combining}. Optimal Bayesian fusion of PDFs~\cite{dezert2001obfr} follows Bayes' rule for conditional density inference in  multiplying individual PDFs followed by division by the joint marginal distribution of the data. The normalised weighted geometric mean rule~\cite{bailey2012conservative}, an extension to non-Gaussian PDFs of the covariance intersection rule~\cite{julier1997non}, avoids division by the marginal data density by direct computation of numerical weights on each function involved in the product of individual measurement PDFs. 

To be applicable to non-linear and non-Gaussian problems, many methods must approximate distributions through sampling techniques. Sampling methods complicate the aforementioned fusion strategies as additional steps of estimation are then required to make PDF multiplication feasible. If the sampling-based approximation of the PDFs are Dirac mixture measures, multiplication must be applied after a process such as kernel density estimation to return non-zero estimates of the PDF products~\cite{mccabe2018fusion}. We present \emph{cross-pollination}, an alternative Bayesian sampling-based strategy, that uses multiple importance sampling ideas. Schneider et al.~\cite{schneider2016synthesis} adapted the multiple importance sampling work of Elvira et al.~\cite{elvira2019generalized}, to a sensor data fusion task through a multiple importance sampling framework. We improve this method via guarantees of convergence in the particle limit and validation of consistency of different weighting schemes, thus providing a data fusion strategy to accurately quantify uncertainty while avoiding the need for multiplication of PDF approximations by working directly with the existing particle estimates.

%%%%%%%%%%%%%%%%%%%%%%%%%%%%%%%%%%%%%%%%%%%%%%%%%%%%%%
%%%%%%%%%%%%%%%%%%%%%%%%%%%%%%%%%%%%%%%%%%%%%%%%%%%%%%
\section{Method}
%%%%%%%%%%%%%%%%%%%%%%%%%%%%%%%%%%%%%%%%%%%%%%%%%%%%%%
%%%%%%%%%%%%%%%%%%%%%%%%%%%%%%%%%%%%%%%%%%%%%%%%%%%%%%
\label{sec:MA}

\subsection{Objective, Set-Up, and Requirements}
The goal of cross-pollination is to approximate a posterior distribution $P(\boldsymbol{\theta}\mid D)$, conditioned on a set of data or observations $D$, from an ensemble of existing particle estimates of marginal posterior distributions  $P(\boldsymbol{\theta}\mid D_j)$, where the $D_j\, (j=1,\ldots,M)$ form a partition of $D$.

The random variable of interest $\boldsymbol{\theta}$ can take various forms e.g., an unknown value as in  \cref{sec:GAM}, a set of parameters which drives a deterministic motion model as in  \cref{sec:OD}, or a time-indexed collection of states which estimate a full trajectory as in  \cref{sec: SSB}. The initial particle estimates can be obtained online through importance sampling-resampling, or be pre-proccessed samples that contain the information from $D_j$ through other Monte Carlo or Markov chain Monte Carlo techniques. Cross-pollination can be applied in batches or sequentially, through repeated applications of the main steps of our weighting scheme and resampling. 

There are a few requirements to apply cross-pollination. The first requirement is that a common prior is used across all the estimates of the random variable of interest. The second is that likelihood functions must be available for all observations. These requirements are essential for Bayesian calculations and are included here for completeness. The third requirement is that the measurements must be independent, this is a common assumption in sampling-resampling frameworks as it allows for joint likelihood functions to be expressed as products of individual likelihood functions~\cite{stone2013bayesian}. The fourth requirement, which we will refer to as the \emph{core requirement}, is that particles obtained from any of the estimates must be able to map into the domain of the likelihood functions of the complement observations. This requirement is what allows our weighting scheme to occur. We will see that in the inference of parameters in a deterministic motion model \cref{sec:OD} the core requirement is easily satisfied, while in the case of a stochastic motion model more work is required to satisfy the core requirement \cref{sec: SSB}.

\subsection{Measure Theoretic Description of Method}
In the discussion of the method that follows we will assert the data $D = (z_1,z_2,\ldots,z_M)$ was partitioned into single observations, $D_j=z_{j}$, and that the initial samples were obtained through importance sampling. Both of these choices are only here for the purpose of clarity; we will divert from this set up in the applications in \cref{sec:EG}. To speak accurately about updating the samples we turn to a measure theoretic viewpoint.

Let $\pi$ be the probability measure corresponding to the posterior distribution $P(\boldsymbol{\theta} \mid D)$, $\pi_{j,0}$ be the probability measure corresponding to the initial marginal posterior distribution $P(\boldsymbol{\theta} \mid D_j)$. We denote the probability measures that estimate these measures as $\pi^N$ and $\pi^N_{j,0}$ respectively, noting that the superscript $N$ indicates that these estimates are each comprised of $N$ particles approximately sampled from their respective distributions.

Let $\boldsymbol{\theta}$ be a random variable of interest with prior $P(\boldsymbol{\theta})$ and observations $D := (z_1,z_2,\ldots,z_M)$ with measurement uncertainty which is encapsulated by a likelihood function. For each member of the partitioned data $D_j$, a Dirac mixture measure $\pi_{j,0}^N$ is formed from importance sampling using a likelihood function associated with the observation, $g_j(\boldsymbol{\theta})$, where $P\left(D_j\mid \boldsymbol{\theta}\right)\propto g_j(\boldsymbol{\theta})$. Specifically,
\begin{equation}
\label{eq:pj0}
\pi_{j,0}^N = \frac{1}{N}\sum_{i=1}^N\delta_{\boldsymbol{\theta}_j^{(i)}} \quad (j = 1,\ldots,M), 
\end{equation}
where the particles $\boldsymbol{\theta}_j^{(i)}\, (i=1,\ldots,N)$ are obtained through importance sampling with the common prior $P(\boldsymbol{\theta})$ as the importance distribution and the likelihood function of the observation used to weight and resample. We denote the Dirac function $\delta(x-x^*)$ as $\delta_{x^*}$. We are to weight the particles of our initial measure $\pi_{j,0}^N$, which is built from a single observation, to form an empirical measure $\pi_{j}^N$ that estimates $\pi$. We use the Radon-Nikodym derivative to obtain the weights needed to match expectations between the two measures. As $\pi$ is absolutely continuous with respect to $\pi_{j,0}$ we have that
\begin{equation}
\label{eq:rn}
\frac{d\pi}{d\pi_{j,0}}(\boldsymbol{\theta}) = \frac{P(D_{-j}\mid \boldsymbol{\theta})}{P(D_{-j}\mid D_j)} \propto \prod_{k\neq j}g_k\left( \boldsymbol{\theta}\right),
\end{equation}
where $D_{-j} = (z_1,z_2,\ldots,z_{j-1},z_{j+1},\ldots,z_M)$ and the final product is due to the independence of the measurements. Each particle is now updated with a weight that corresponds to the value in the product of likelihood functions of the complementary data. We refer to this step as \emph{cross-epoch weighting}.

\begin{equation}
\label{eq:pjn}
\pi_{j}^N = \sum_{i=1}^N w_{j}^{(i)}\delta_{\boldsymbol{\theta}_{j}^{(i)}},\quad w_{j}^{(i)} = \frac{\widetilde{w}_{j}^{(i)}}{\sum_{i'=1}^N \widetilde{w}_{j}^{(i')}},\quad \widetilde{w}_{j}^{(i)} = \prod_{k\neq j}g_k\left(\boldsymbol{\theta}_j^{(i)}\right).
\end{equation}
Notice that the structure of the weights is reflected in an application of Bayes' rule working directly with the probability distribution functions,
\begin{equation}
    P(\boldsymbol{\theta}\mid D) = P(\boldsymbol{\theta}\mid D_{-j},D_j) = \frac{P(D_{-j}\mid \boldsymbol{\theta},D_j)}{P(D_{-j}\mid D_j)}P(\boldsymbol{\theta}\mid D_j)=\frac{P(D_{-j}\mid \boldsymbol{\theta})}{P(D_{-j}\mid D_j)}P(\boldsymbol{\theta}\mid D_j).
\end{equation}
Now, we combine the empirical measures through pooling and resampling to obtain $N$ equally weighted particles. One way to achieve this is through combining all of the measures with a summation and dividing by the total number of measures

\begin{equation}
\label{eq:pjmn}
\pi^{MN} = \frac{1}{M} \sum_{j=1}^M \pi_j^N = \frac{1}{M} \sum_{j=1}^M \sum_{i=1}^N w_{j}^{(i)}\delta_{\boldsymbol{\theta}_{j}^{(i)}},
\end{equation}
and then employing a resampling technique, say multinomial, to obtain our final approximation
\begin{equation}
\label{eq:pn}
\pi^N = \frac{1}{N}\sum_{i=1}^N \delta_{\boldsymbol{\theta}^{(i)}}, 
\end{equation}
where $\boldsymbol{\theta}^{(i)}\, (i=1,\ldots,N)$ are identically and independently distributed from the measure $\pi^{MN}$. 

We can fuse the particles in another way; instead of normalizing the weights by data partition, as in \eqref{eq:pjn}, we can go straight to a fused measure by normalizing all the weights together,
\begin{equation}
\label{eq:pmnt}
    \pi^{MN}_{t} = \sum_{j=1}^M\sum_{i=1}^N w_{j}^{(i)}\delta_{\boldsymbol{\theta}_{j}^{(i)}},\quad w_{j}^{(i)} = \frac{\widetilde{w}_{j}^{(i)}}{\sum_{j'=1}^M\sum_{i'=1}^N \widetilde{w}_{j'}^{(i')}},\quad \widetilde{w}_{j}^{(i)} = \prod_{k\neq j}g_k\left(\boldsymbol{\theta}_j^{(i)}\right).
\end{equation}
Then, in the same fashion we would employ a sampling technique to get to our final estimate of $N$ particles,
\begin{equation}
\label{eq:pnt}
\pi^N_{t}= \frac{1}{N}\sum_{i=1}^N \delta_{\boldsymbol{\theta}^{(i)}}, 
\end{equation}
where $\boldsymbol{\theta}^{(i)}\, (i=1,\ldots,N)$ are identically and independently distributed from the measure $\pi_t^{MN}$.
We refer to the methods as `norming-apart' and `norming-together', respectively. This norming-apart process is diagrammed in \cref{diagram} both methods are proved to converge under a root mean square distance in the particle limit when the initial estimates are formed using importance sampling. The proof of the norming-apart method is presented in \cref{sec:theory}. The ease of these formulations suggest straightforward algorithms for both methods, pseudo-code is provided in \cref{sec:code}.

%%%%%%%%%%%%%%%%%%%%%%%%%%%%%%%%%%%
%DIAGRAM
%%%%%%%%%%%%%%%%%%%%%%%%%%%%%%%%%%%

\begin{figure}[tbhp]
\hspace*{3cm}%
\begin{tikzpicture}[
datanode/.style={circle, draw=blue, thin, minimum size=7mm},
pnode/.style={rectangle, draw=blue, thin, minimum size=5mm},
lnode/.style={rectangle, draw=red, thin, minimum size=5mm},
onode/.style={rectangle, draw=red!0, thin, minimum size=8mm},
]

% STAGE ONE
\pgfmathsetseed{4}
   
\foreach \i in {1,...,25}
   \draw[xshift=random()*1.5cm, yshift=random()*1.5cm, fill=cyan!50] (0,0) circle (1mm);
   
\foreach \i in {1,...,25}
   \draw[xshift=random()*1.5cm, yshift=random()*1.5cm, fill=blue!50] (3,0) circle (1mm);
   
\foreach \i in {1,...,25}
   \draw[xshift=random()*1.5cm, yshift=random()*1.5cm, fill=magenta!50] (6,0) circle (1mm);

% STAGE TWO  
\pgfmathsetseed{4}  

\foreach \dia in {1mm,.1mm,1mm,1mm,1mm,.5mm,.1mm,.3mm,.3mm,.3mm,1mm,.1mm,1mm,1mm,.5mm,.7mm,1mm,.3mm,1mm,1mm,1mm,1mm,.7mm,1mm}
    \draw[xshift=random()*1.5cm, yshift=random()*1.5cm, fill=cyan!50] (0,-3) circle (\dia);

\foreach \dia in {.3mm,1mm,1mm,1mm,.7mm,.3mm,.5mm,.5mm,1mm,.3mm,1mm,.5mm,1mm,.7mm,.1mm,1mm,.7mm,1mm,.3mm,.7mm,.3mm,1mm,.5mm,.3mm}
   \draw[xshift=random()*1.5cm, yshift=random()*1.5cm, fill=blue!50] (3,-3) circle (\dia);
   
\foreach \dia in {.3mm,.3mm,.3mm,1mm,1mm,1mm,1mm,.7mm,.5mm,.3mm,.5mm,.7mm,.5mm,1mm,.7mm,.3mm,1mm,1mm,1mm,1mm,.5mm,.3mm,1mm,.5mm}
   \draw[xshift=random()*1.5cm, yshift=random()*1.5cm, fill=magenta!50] (6,-3) circle (\dia);

% STAGE THREE  
\pgfmathsetseed{4}  

\foreach \dia in {1mm,.1mm,1mm,1mm,1mm,.5mm,.1mm,.3mm,.3mm,.3mm,1mm,.1mm,1mm,1mm,.5mm,.7mm,1mm,.3mm,1mm,1mm,1mm,1mm,.7mm,1mm}
    \draw[xshift=random()*1.5cm, yshift=random()*1.5cm, fill=cyan!50] (3,-6) circle (\dia);

\foreach \dia in {.3mm,1mm,1mm,1mm,.7mm,.3mm,.5mm,.5mm,1mm,.3mm,1mm,.5mm,1mm,.7mm,.1mm,1mm,.7mm,1mm,.3mm,.7mm,.3mm,1mm,.5mm,.3mm}
   \draw[xshift=random()*1.5cm, yshift=random()*1.5cm, fill=blue!50] (3,-6) circle (\dia);
   
\foreach \dia in {.3mm,.3mm,.3mm,1mm,1mm,1mm,1mm,.7mm,.5mm,.3mm,.5mm,.7mm,.5mm,1mm,.7mm,.3mm,1mm,1mm,1mm,1mm,.5mm,.3mm,1mm,.5mm}
   \draw[xshift=random()*1.5cm, yshift=random()*1.5cm, fill=magenta!50] (3,-6) circle (\dia);
   
% STAGE FOUR
\pgfmathsetseed{4}  

\foreach \dia in {1mm,0mm,1mm,1mm,1mm,0mm,0mm,0mm,0mm,0mm,1mm,0mm,1mm,1mm,0mm,1mm,1mm,0mm,1mm,1mm,1mm,1mm,0mm,0mm}
    \draw[xshift=random()*1.5cm, yshift=random()*1.5cm, fill=blue!25] (3,-9) circle (\dia);

\foreach \dia in {0mm,1mm,1mm,1mm,1mm,0mm,0mm,0mm,1mm,0mm,0mm,0mm,0mm,0mm,0mm,1mm,0mm,1mm,0mm,0mm,0mm,1mm,1mm,0mm}
   \draw[xshift=random()*1.5cm, yshift=random()*1.5cm, fill=blue!25] (3,-9) circle (\dia);
   
\foreach \dia in {0mm,0mm,0mm,1mm,1mm,1mm,1mm,1mm,1mm,0mm,0mm,1mm,0mm,0mm,1mm,0mm,1mm,1mm,1mm,1mm,0mm,0mm,0mm,0mm}
   \draw[xshift=random()*1.5cm, yshift=random()*1.5cm, fill=blue!25] (3,-9) circle (\dia);
 
\node at (-1, .75)    (a) {$\pi^N_{j,0}$}; 
\node at (-1, .75-3)  (b) {$\pi^N_{j}$}; 
\node at (-1, .75-6)  (c) {$\pi^{MN}$}; 
\node at (-1, .75-9)  (d) {$\pi^{N}$}; 
\draw[thick,->] (a.south) -- (b.north) node[midway,sloped,below] {\small Weighted};
\draw[thick,->] (b.south) -- (c.north) node[midway,sloped,below] {\small Pooled};
\draw[thick,->] (c.south) -- (d.north) node[midway,sloped,below] {\small Resampled};

% zero
\draw[rounded corners,cyan!50,thick] (-.15-.02,-3.25-.02+3) rectangle ++(2,2);
\draw[rounded corners,blue!50,thick] (-.15-.02+3,-3.25-.02+3) rectangle ++(2,2);
\draw[rounded corners,magenta!50,thick] (-.15-.02+6,-3.25-.02+3) rectangle ++(2,2);
% first
\draw[rounded corners,magenta!50,thick] (-.15-.02,-3.25-.02) rectangle ++(2,2);
\draw[rounded corners,blue!50,thick] (-.15+.04,-3.25+.04) rectangle ++(1.875,1.875);
% second
\draw[rounded corners,magenta!50,thick] (-.15-.02+3,-3.25-.02) rectangle ++(2,2);
\draw[rounded corners,cyan!50,thick] (-.15+.04+3,-3.25+.04) rectangle ++(1.875,1.875);
% third
\draw[rounded corners,blue!50,thick] (-.15-.02+6,-3.25-.02) rectangle ++(2,2);
\draw[rounded corners,cyan!50,thick] (-.15+.04+6,-3.25+.04) rectangle ++(1.875,1.875);

\end{tikzpicture}
\label{diagram}
\caption{Cross-Pollination Process: Particles (circles) have initially only gained the statistical information (squares) of their own data (shared color between circle and square). The process continues by imparting the statistical information that the other particles have not been exposed to yet (different color between circles and squares) and imparting weights (size of particle, larger size indicating larger weight). The particles are then pooled and resampled based on their weights.}
\end{figure}
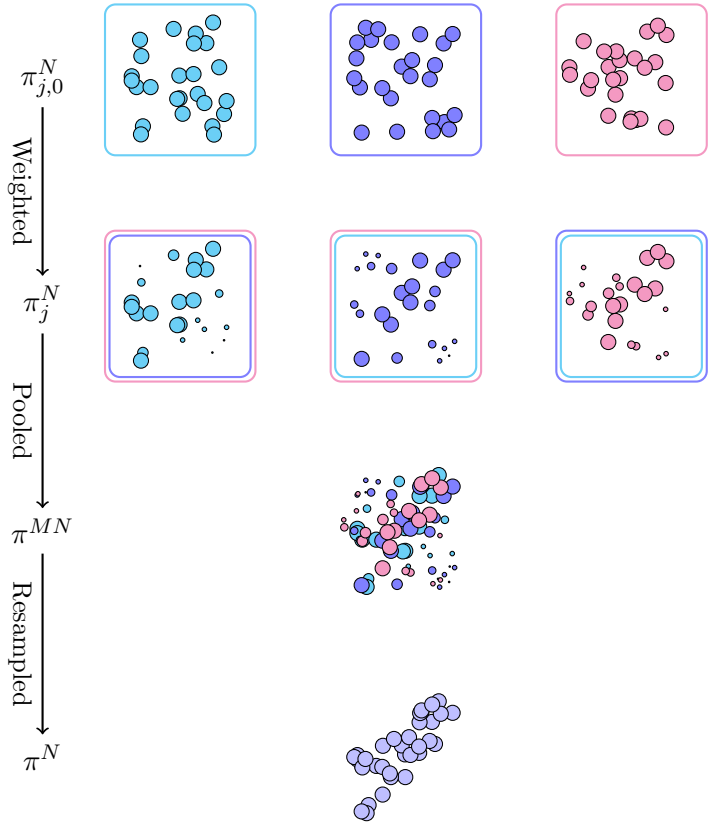

\subsection{Relation to Multiple Importance Sampling}

Importance sampling is a well-known technique of sampling from a known distribution $q(\boldsymbol{\theta})$ and adjusting the samples with importance weights to approximate a target distribution $P(\boldsymbol{\theta})$~\cite{robert2013monte}. The typical form of the importance weights for a sample $\boldsymbol{\theta}^{(i)}$ is $w\left(\boldsymbol{\theta}^{(i)}\right) = P\left(\boldsymbol{\theta}^{(i)}\right)/q\left(\boldsymbol{\theta}^{(i)}\right)$. Multiple importance sampling expands this idea by introducing a number of proposal distributions $q_j(\boldsymbol{\theta})\, (j=1,\ldots,M)$ that can be adjusted in various ways to estimate the desired distribution. Both the weighting and ordering of selection from proposal distributions can affect the final estimate~\cite{elvira2019generalized}. 

The connection to cross-pollination is formed by considering the importance sampling target distribution as the posterior conditioned on all the data $P(\boldsymbol{\theta}) = P(\boldsymbol{\theta}\mid D)$ and the importance sampling proposal distributions to be the marginal posteriors $q_j(\boldsymbol{\theta}) = P(\boldsymbol{\theta}\mid D_j)$, which makes the initial measures $\pi^N_{j,0}$ the collections of samples obtained from each of our proposal distributions. By using the data-dependent proposals we have created a data fusion algorithm that can make use of existing multiple importance sampling theory. Both the norming-apart and norming-together cross-pollination schemes resemble aspects of \emph{generic diverse population Monte Carlo} methods discussed by Elvira et. al~\cite{elvira2017population}. Schneider's original work on cross-pollination~\cite{schneider2016synthesis} used the \emph{deterministic mixture weights} proposed by Elvira et al.~\cite{elvira2019generalized}. In this case, one would have,
\begin{equation}
\label{pmndm}
    \pi^{MN}_{dm} = \sum_{j=1}^M\sum_{i=1}^N w_{j}^{(i)}\delta_{\boldsymbol{\theta}_{j}^{(i)}},\quad w_{j}^{(i)} = \frac{\widetilde{w}_{j}^{(i)}}{\sum_{j'=1}^M\sum_{i'=1}^N \widetilde{w}_{j'}^{(i')}},\quad \widetilde{w}_{j}^{(i)} = \frac{\prod_{k=1}^M g_k\left(\boldsymbol{\theta}_{j}^{(i)}\right)}{\sum_{k=1}^Mg_k\left(\boldsymbol{\theta}_{j}^{(i)}\right)},
\end{equation}
and, through sampling would obtain,
\begin{equation}
\label{eq:pndm}
\pi^N_{dm} = \frac{1}{N}\sum_{i=1}^N \delta_{\boldsymbol{\theta}^{(i)}}, 
\end{equation}
where $\boldsymbol{\theta}^{(i)}\, (i=1,\ldots,N)$ are identically and independently distributed from the measure $\pi_{dm}^{MN}$.
This weight formulation is of particular interest as Elvira et al.~\cite{elvira2019generalized} catalog it as the scheme having the minimum variance of the multiple importance sampling schemes they examined. However, this scheme is only shown to be consistent, in the limit of the number of proposals, when the normalizing constant is known. This makes the deterministic mixture weights scheme less applicable to the problem at hand which demands a fixed number of proposal distributions depending on how the data has been partitioned and has no knowledge of the normalizing constant, which is typically estimated by the sum of the weights in importance sampling theory.

%%%%%%%%%%%%%%%%%%%%%%%%%%%%%%%%%%%%%%%%%%%%%%%%%%%%%%%%%%%%%%%%%
%%%%%%%%%%%%%%%%%%%%%%%%%%%%%%%%%%%%%%%%%%%%%%%%%%%%%%%%%%%%%%%%%
\section{Applications}
%%%%%%%%%%%%%%%%%%%%%%%%%%%%%%%%%%%%%%%%%%%%%%%%%%%%%%%%%%%%%%%%%
%%%%%%%%%%%%%%%%%%%%%%%%%%%%%%%%%%%%%%%%%%%%%%%%%%%%%%%%%%%%%%%%%
\label{sec:EG}

\subsection{Set-up}
We demonstrate the algorithms on three examples: the first is an example with an analytic posterior distribution that serves as an introduction to application of the method as well as a verification of the convergence rate, the second is an orbit determination problem where particle estimates borne of data from two different sensors are fused via the norming-together method, the third example is a bearings-only tracking example that shows one way the ideas of cross-pollination may be carried into a sequential problem with a stochastic motion model. For simplicity in the latter two examples, we consider normal or multivariate normal likelihood functions for measurements. We denote the PDF of a normal distribution with mean $\mu$ and variance $\sigma^2$ evaluated at $x$ as $\mathcal{N}(x;\mu,\sigma^2)$. Similarly, for a multivariate normal we will have $\mathcal{MVN}(\mathbf{x};\boldsymbol{\mu},\boldsymbol{\Sigma})$ to be the PDF of a multivariate normal with mean $\boldsymbol{\mu}$ and covariance matrix $\boldsymbol{\Sigma}$ evaluated at $\mathbf{x}$. 

\subsection{Convergence Verification on Gamma Example}
\label{sec:GAM}
Our theory argues for a convergence rate of $\sqrt{N}$ of our estimator $\pi^N$ to the probability measure $\pi$ that corresponds to the posterior distribution. We verify on an example with Gamma distributions. We denote that $x$ is distributed from a Gamma distribution with shape parameter $k_0$ and rate parameter $\theta$ by $x\sim \mathcal{G}(k_0,\theta)$. First we give our random variable of interest a gamma prior distribution $x \sim \mathcal{G}(k_0,\theta_0)$. Next we have three observations each with a Gamma distribution so that $y_j|x \sim \mathcal{G}(k_j,x^{-1})\, (j=1,2,3)$. This set-up admits an analytic posterior distribution,

\begin{equation}
   x|y_1,y_2,y_3 \sim \mathcal{G}\left(k_0+k_1+k_2+k_3, (\theta_0^{-1} +y_1+y_2+y_3)^{-1}\right).
\end{equation}

For this experiment we chose $k_0 = 5/2,\,\theta_0=1/2,\,y_1=1,\,y_2=2,\,y_3=3,\,k_1=4,\,k_2=10,\,k_3=25$. The sets of initial particles corresponding to $\pi_{j,0}^N\,(j=1,2,3)$ were obtained by one round of standard importance sampling. The particles were then cross-epoch weighted and resampled under both methods of norming-together and norming-apart cross-pollination. We conducted $1000$ Monte Carlo trials for $N= 10^2,\,10^3,\,10^4,\,10^5,\,10^6$ particles.

\begin{figure}[tbhp]
\centering
\subfloat[]{\label{fig:a}\includegraphics[scale=0.4]{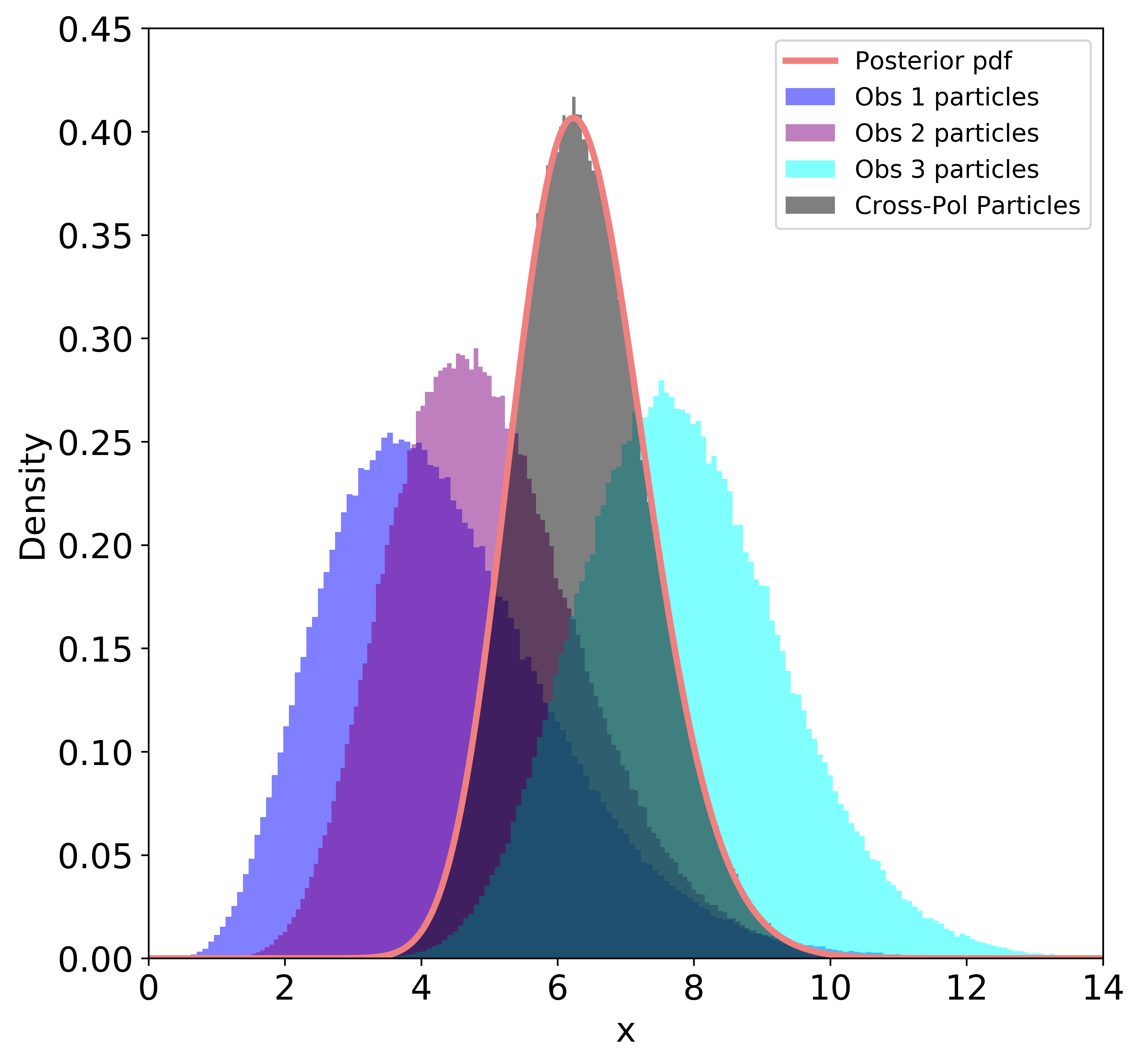}}
\hfill
\subfloat[]{\label{fig:b}\includegraphics[scale=0.4]{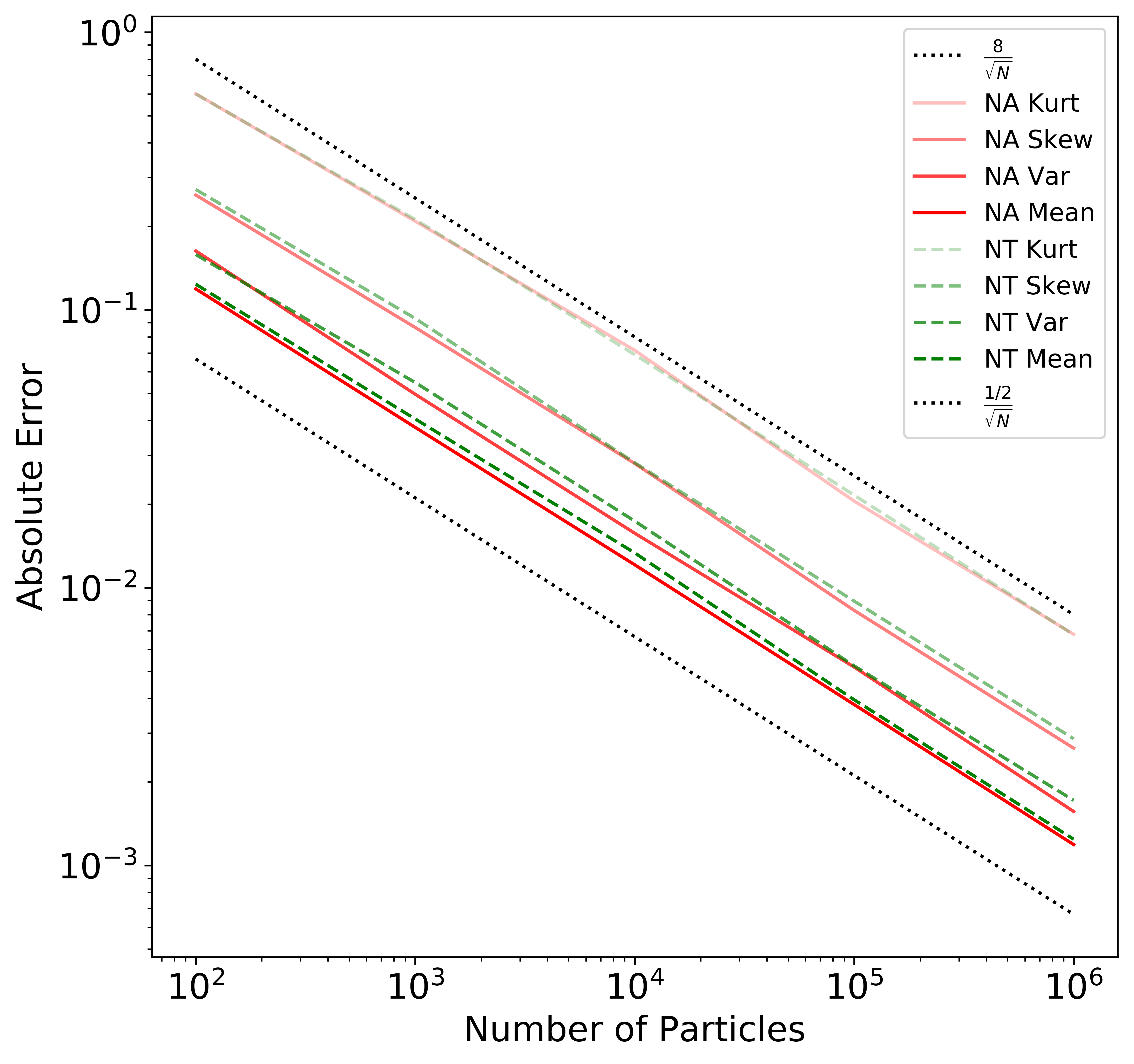}}
\caption{(a)  Blue, purple, and cyan histograms show the initial sets of particles for a trial with $10^6$ particles taken created from observations $1,\,2\,,\text{and }3$ respectively. The grey histogram is of particles that have been selected after a single run of cross-pollination. Solid pink line is the posterior PDF. (b) The black dotted lines follow $\frac{8}{\sqrt{N}}$ (top) and $\frac{1/2}{\sqrt{N}}$ (bottom). Solid red lines from darkest to lightest plot the mean over $1000$ Monte Carlo trials of the absolute error of the estimated first four moments (mean, variance, skew, kurtosis) against the true value of the moments for the norming-apart method. Similarly the green dashed lines describe the same situation for the norming-together method.}
\label{fig:testfig}
\end{figure}

To reiterate, we formed the particles from the observations (labelled Obs 1 particles etc in \cref{fig:a}) based on drawing particles from the prior and importance sampling with the single Gamma likelihood. The particles were then put through the cross-pollination process of cross-epoch weighting and resampling to form the final estimate of particles (labelled Cross-pol particles in \cref{fig:a}). Our theory is supported by \cref{fig:b} as we see the mean absolute error decreases at a rate proportional to the square root of the number of particles for each moment. The proportion changes depending on which moment is being estimated, as one would expect the proportionality constant is higher for the higher order moments. 

We included this example to provide an introduction to the application of the method and check convergence rates where the posterior distribution had an analytic form. We believe cross-pollination can be a useful tool in multi-sensor data fusion applications as demonstrated in the following applications.

\subsection{Orbit Determination}
\label{sec:OD}

Space traffic management is becoming an increasingly challenging remote sensing and data fusion problem as the number of satellites and debris around the Earth continue to grow. A key aspect of space traffic management problems includes orbit determination. Multi-sensor data fusion for an orbital determination problem has been explored using the geometric mean density (GMD) fusion rule~\cite{mccabe2018fusion}.

In an orbit determination scenario where ground-based optical angles-only measurements are accessible online and orbital measurements are lagged, practitioners may desire making ground-based estimates immediately and fusing the orbital estimates when they become available at a later time. We simulate this situation below for Keplerian orbits.

Keplerian orbits trace closed ellipses in space and are described in a six-dimensional state space, which is typically described by the six orbital elements: semi-major axis, eccentricity, inclination, argument of periapsis, right ascenscion of the ascending node, and the true anomaly. Equivalently, these orbits can be identified by a three-dimensional position and a three-dimensional velocity at a given time. We let $\boldsymbol{\theta} = (x,y,z,\dot{x},\dot{y},\dot{z})$ at time $t_0$, $P(\boldsymbol{\theta})$ be the prior, and consider two sets of three right ascension (RA) and declination (DEC) measurements occurring: one taken from a sensor in low Earth orbit (LEO) $D_1$ over three minutes and the other from a ground-based observer $D_2$ over a four minute period taking place six minutes after the final orbiting-sensor measurement. Sensor measurements are modeled as multivariate Gaussian with mean given by the true RADEC and standard deviations given by $\sigma_{\rm LEO} =2 \text{ arcseconds}$ and $\sigma_{\rm ground} =20 \text{ arcseconds}$. Therefore, the likelihood function for measurements $j$ of the LEO observer is,
\begin{equation}
    g_j\left(\boldsymbol{\theta}^{(i)}\right) = \mathcal{MVN}\left(H_j\left(\boldsymbol{\theta}^{(i)}\right);z_j,\boldsymbol{\Sigma}_{LEO}\right),
\end{equation}
where the non-linear measurement function $H_j$ maps the particles into the correct measurement space by propagating the orbit specified by $\boldsymbol{\theta}^{(i)}$ to the correct observation time $t_j$ and retrieving the RADEC measurements that would have been obtained from $jth$ sensor, and the covariance matrix is  $\boldsymbol{\Sigma}=\sigma_{\rm LEO}^2 \mathbf{I}_2$. Of course, the ground based likelihoods take the same form but with $\sigma_{\rm ground}$. This set-up satisfies the requirements of cross-pollination.

We will apply cross-pollination to the estimates that comprise $\pi^N_{1,0}$ and $\pi^N_{2,0}$ which consist of $50,000$ samples each, obtained with an MCMC sampler to produce initial orbit determinations based on the data provided. We fuse these samples using norming-together cross-pollination and perturb the samples to achieve our final estimate $\pi^N$. We diagram this process in \cref{fig:3-orbit}.

\begin{figure}[tbhp]
  \centering
  \label{fig:3-orbit}\includegraphics[scale=.39]{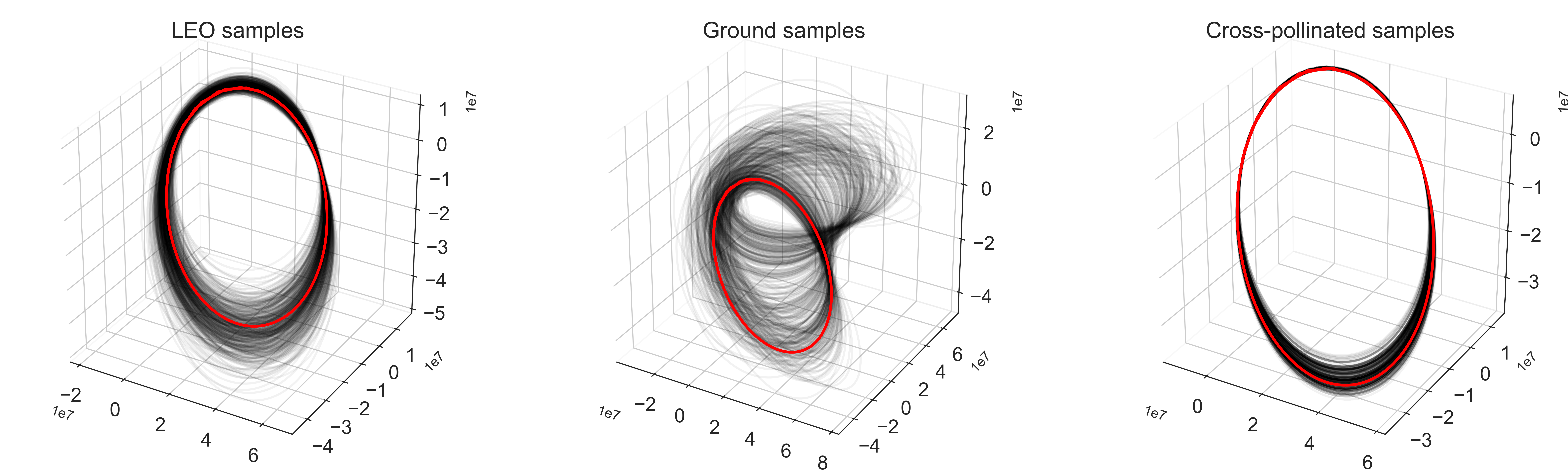}
  \caption{From left to right: a random sample of $450$ of the $50,000$ particles estimates traced out orbits (grey) and the true orbit (red) for the LEO sensor, ground-based sensor, and the estimates after being refined through the norming-together cross-pollination process.}
\end{figure}

From \cref{fig:3-orbit} we can see that the refined orbits from the norming-together cross-pollination process are providing an improved picture of the situation and uncertainty. For this example the root mean square error of the positional portion of the orbital elements for the samples before cross-pollination was $280\,km$ and $681\,km$ for the ground and LEO samples respectively, after cross-pollination this was improved to an root mean square error of $32\,km$. Similarly the root mean square errors of the the velocity portion improved from $104\,m/s$ and $2243\, m/s$ to $42\,m/s$. We can see in \cref{fig:3-orbit} a distinct improvement in the uncertainty as the ground sensor MCMC process allowed many samples from a degenerate orbital plane, due to a small observation window and large uncertainty in the measurements, that has been eliminated in the fused results. Furthermore, the distinct ``banana shaped uncertainty''~\cite{horwood2014gauss} that emerges in orbit determination problems is seen in the rightmost image of \cref{fig:3-orbit}, indicating uncertainty realism.

\subsection{Multiple Sensor Sequential Bearings}
\label{sec: SSB}

We have applied cross-pollination to two situations where the core requirement was easily satisfied\textemdash likelihood functions could be immediately applied to particles. This was due to fusing three measurements at a single epoch in \cref{sec:GAM} and utilizing a deterministic motion model in \cref{sec:OD}. Not all situations will follow these formats. In this application we demonstrate an adaptation of the cross-pollination ideas for a sequential smoothing problem with a stochastic motion model.

Say we desire a smoothing distribution $P(x_{1:6}\mid y_{1:6})$ for a tracking problem where an object follows stochastic dynamics. Furthermore, suppose that the estimates obtained are from processes ran on two sensors with differing cadences so that we desire to fuse the trajectories (particles) of $\pi_1^N$ corresponding to $P(x_1,x_3,x_5\mid D_1=(y_1,y_3,y_5))$ and $\pi_2^N$ corresponding to $P(x_2,x_4,x_6\mid D_2=(y_2,y_4,y_6))$. If the $y_i$ took place at distinct times $t_i$ and the times are increasing $t_1<t_2<t_3<\ldots$, it is clear that a likelihood that corresponds to $t_2$ can not be applied to a trajectory from $\pi^N_1$ as the state at time $t_2$ has not been recorded. We note that this corresponds to the problem of having to incorporate out-of-sequence measurements (OOSM) which has been discussed in tracking and data fusion literature~\cite{bar2004one}. An attractive solution to implement cross-pollination would be to consider the Bayesian updates used in the OOSM particle filter~\cite{orton2002bayesian}. In the following example we mitigate this problem by saving the state of the particles where any observation occurred or will occur. Depending on sensor architecture this may not be possible and other mentioned methods could be employed so that likelihood functions could be applied at all necessary times. 

We apply cross-pollination to a bearings-only tracking problem inspired by Gordon et al.~\cite{gordon1993novel}. Several other methodologies have been developed to perform well on the bearing-only tracking problem, for example the resample-move filter~\cite{gilks2001following}. We choose this problem not to compete with the other methods but to demonstrate the versatility of cross-pollination to a multi-sensor sequential problem with a stochastic motion model. We will be estimating the the state vectors of positions and velocities, up to time $t$, which will be driven through a linear motion model with additive Gaussian noise. That is,

\begin{equation}
    \boldsymbol{\theta}_t = (\mathbf{x_1},\ldots\mathbf{x_t}),\quad \mathbf{x_j} = \Phi \mathbf{x_{j-1}} +\Gamma \mathbf{w_j},
\end{equation}

\begin{equation}
    \mathbf{x_j} = (x,\dot{x},y,\dot{y})_j^T, \quad \mathbf{w_j} =(w_x,w_y)_j^T,\quad \mathbf{w_j} \sim N(\mathbf{0},\sigma_q^2\mathbf{I}_2) ,
\end{equation}

\begin{align}
  \Phi = \begin{bmatrix}1&1&0&0\\0&1&0&0\\0&0&1&1\\0&0&0&1\end{bmatrix} \quad \Gamma =\begin{bmatrix}0.5&0\\1&0\\0&0.5\\0&1\end{bmatrix}.
\end{align}
Bearing observations of the process will be obtained through two different sensors with additive Gaussian noise. The $jth$ sensor is located at $(x_{S_j},y_{S_j})$ (actual locations $(-1,1)$ and $(1,-1)$) so that likelihood functions of the measurements applied to particles are 
\begin{equation}
    g_j\left(\boldsymbol{\theta}^{(i)}\right) = \mathcal{N}\left(H\left(\boldsymbol{\theta}^{(i)}\right);z_j,\sigma_r^2\right), \quad H\left(\boldsymbol{\theta}^{(i)}\right) = \arctan\left(\frac{y_i-y_{Sj}}{x_i-x_{Sj}}\right).
\end{equation}

The standard deviations of the sensor and model noise are $\sigma_r = 0.005$ and $\sigma_q = 0.001$ respectively. The prior for initial state is a multivariate Gaussian such that
\begin{equation}
    \mathbf{x_0} \sim \mathcal{MVN}(\mathbf{x}_p,\boldsymbol{\Sigma}_p),\quad \mathbf{x}_p = [0,0,0.4,-0.05]^T, \quad \mathbf{\boldsymbol{\Sigma}_p} =\begin{bmatrix}0.5^2&0&0&0\\0&0.005^2&0&0\\0&0&0.3^2&0\\0&0&0&0.01^2\end{bmatrix}.
\end{equation}
The initial state is $\mathbf{x_{0}} = [-0.5,0.001,0.7,-0.55]^T$. This starting state is propagated through the motion model for $20$ time steps to form the true trajectory.

To demonstrate further flexibility of the ideas of cross-pollination we partition the data completely so that $D_{j}=y_{j}$ with $D = (y_1,y_2,\ldots,y_{20})$. We obtain the sample trajectories $\boldsymbol{\theta}_j^{(i)}$ that comprise the empirical measure $\pi^N_{j,0}$ corresponding to $P(\boldsymbol{\theta_j}\mid D_j=y_j)$ by propagating prior samples forward until time $j$, saving the state at each time ($1,2,\ldots,j$), and then performing importance sampling and resampling of the whole trajectory based on the $jth$ likelihood function. 

To start the process of sequential cross-pollination we begin with $\pi^N_{1,0}$ and motion the trajectories to $t_2$. The forward motioned trajectories of $\pi^N_{1,0}$ are then fused with the trajectories of $\pi^N_{2,0}$ by weighting the motioned trajectories with the likelihood $g_2$ and weighting the trajectories of $\pi^N_{2,0}$ with the ``backward in time'' likelihood $g_1$. After the cross-epoch weighting the trajectories can be pooled and resampled using norming-apart or norming-together processes as they have seen all the relevant statistical information up to the second epoch. To continue the process the cross-pollinated trajectories denoted $\pi_{2,*}^N$ are motioned forward to $t_3$ and then cross-pollinated with $\pi_{3,0}^N$. Notice that the trajectories of $\pi^N_{3,0}$ would now be weighted with two likelihoods $g_1$ and $g_2$. Extending this notion, the samples of $\pi_{j,0}^N$ would be weighted with a product of $j-1$ likelihoods. This process continues until the desired time, here $t=20$, so that $\pi_{20,*}^N$ would estimate the measure that corresponds to $P(\boldsymbol{\theta}_{20} \mid D = y_{1:20})$. Pseudo-code for this process is provided in \cref{sec:code}. Due to particle sparsity issues raised by the application of several backward in time likelihoods we implemented this process with log likelihoods and resampled the particles during the process of applying the cross-epoch weight whenever the effective sample size was below a threshold and perturbed the particles that survived the resampling. Notice that if the initial particles are obtained via importance sampling resampling and the resample step accepts only particles originating from the measure $\pi^N_{j,*}$ at each step then this algorithm is essentially the sequential importance resampling particle filter~\cite{gordon1993novel}. The additional steps provide new possible trajectories that could help reduce particle sparsity.

In \cref{fig:SB} we see wide swaths of initial particles that comprise empirical measures at particular epochs $\pi^N_{j,0} (j=1,4,7,10,13,16,19)$. The sequential  cross-pollination process ends with particles that are in line with the bearing measurements for every epoch with significantly reduced uncertainty. 

\begin{figure}[tbhp]
  \centering
  \label{fig:SB}\includegraphics[scale=.5]{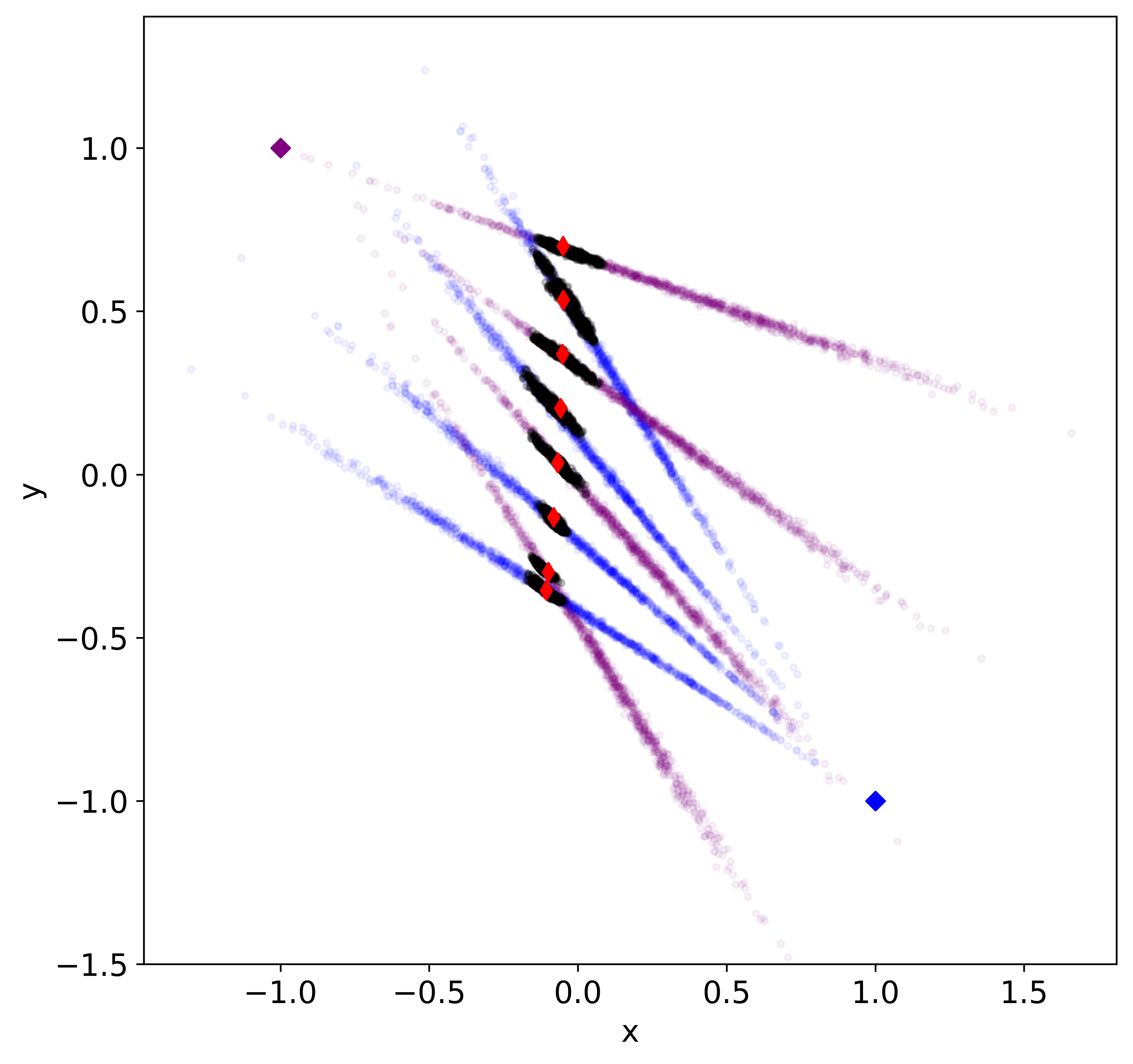}
  \caption{Sequential cross-pollination applied to a two sensor bearing-only tracking problem. Depicted are the sensor locations (purple and blue diamonds), initial particles (purple and blue circles, color matched to the sensor from which the particles have seen data from), cross-pollinated particles (black circles), and true locations (red diamond) for epochs $1,4,7,10,13,16,19, \text{ and } 20$. Time progresses from top to bottom so that epoch $1$ is at the top of the figure and epoch $20$ is at the bottom.}
\end{figure}

\section{Discussion}
\label{sec:End}

We have presented two methods capable of fusing estimates with guarantees in the particle limit. Applying the methods to different examples have led us to some empirical insights about the methods. In our experience, norming-together is more robust to particle degeneracy issues. For example, consider the scenario of fusing two batches of particles from two different sensors where one sensor has far more uncertainty than the other. Applying norming-apart with too few particles can be detrimental in this scenario if the uncertainty from the ``bad'' sensor was so wide that a single particle receives high weight. When pooled together with reasonably weighted particles from the other set this ``best of the bad'' particle would be replicated many times in the resampling framework and could lead to biased estimates. Computations should be performed with sufficient numbers of particles so that this problem is less likely to occur. In this same scenario it may be the case that norming-together only selects particles from the sensor with the reduced uncertainty in the resampling step, while still refining the estimates. This demonstrates that information from more uncertain sensors is still valuable and can be used in improving estimates through cross-pollination. The sequential cross-pollination example demonstrated applicability of cross-pollination ideas to stochastic motion models, however we believe that cross-pollination is most readily applicable in data-starved situations which would correspond to fewer applications of likelihood functions.  

Many ideas must be considered in order to improve the method. Different weighting and resampling schemes for multiple importance sampling lead to improved variances of the estimators~\cite{elvira2017population,elvira2019generalized}. For large sets of data the weights obtained from cross-epoch weighting may render many particles degenerate. Sequential resampling may be used to combat this degeneracy so that fewer likelihoods are considered at a time~\cite{gordon1993novel}. In the batch case, this leads to questions about the order in which the likelihoods should be applied to the weights. For instance, if the order of application of likelihoods is based from the most to least uncertain data improvements may be possible~\cite{pao2000optimal}. 

We are interested in applications and extension of cross-pollination to the areas of multiple target tracking, distributed sensing, and optimal control. In multiple target tracking, track estimates may be cross-pollinated if they are associated to the same object. As both methods pool the sets of particles together, they could be considered centralized fusion methods. Norming-apart is decentralized before the particles are pooled together\textemdash only the likelihoods from the other are required to update the individual sets of particles for an accurate posterior estimates. The ideas in the norming-apart approach could have interesting applications in distributed sensing. Rather than passing state vectors or model parameters from sensor to sensor, likelihood functions could be passed between sensors as a way to refine estimates. This is especially interesting in the case of limited communication bandwidths between sensors. For optimal control, it is possible to identify optimal paths to reach a specified target and associated optimal actions from Monte Carlo sampling of paths under an uncontrolled ({\it i.e.,} open loop) stochastic motion model~\cite{kappen2005path}.  In some application instances, information about candidate paths can be available from different sensors or agents, which might then be combined with cross-pollination resampling to enable optimal control solutions with greater computational efficiency.

\appendix
\section{Theory}
\label{sec:theory}

We now prove that the final resampled measure from the norming-apart cross-pollination process, $\pi^N$, converges to $\pi$ in the particle limit. In particular, we prove this convergence if the initial particles were sourced via importance sampling. We make use of the following ``root mean square" distance, notation, and operators on the set of probability measures presented by Law et al.~\cite[p.87-93]{law2015data}: 
\begin{equation}
d(\mu,\nu) = \sup_{||f||_{\infty}\leq 1} \sqrt{(\mathbb{E}|\mu(f)-\nu(f)|^2)},
\end{equation}
where
\begin{equation}
    \mu(f) = \int_{\mathbb{R}^{n}} f(v)\mu(dv),
\end{equation}
 
\begin{equation}
    \left(L_j\mu\right)(d\nu) =\frac{g_j(\nu)\mu(d\nu)}{\int_{\mathbb{R}^n}g_j(\nu)\mu(d\nu)},
\end{equation}
and
\begin{equation}
    \left(S^N\mu\right)(d\nu) = \frac{1}{N}\sum_{n=1}^N\delta_{\nu^{(n)}}(d\nu), \quad \nu^{(n)} \sim \mu \text{ i.i.d.}.
\end{equation}
Here, $g_j$ is the likelihood function corresponding to the $j$th data point $D_{j}$. We also make use of two lemmas proved by Law et al.~\cite[p.87-93]{law2015data}: 

\begin{lemma}
\label{lem:samp}
The sampling operator satisfies
\begin{equation}
    \sup_{\mu \in P(\mathbb{R}^N)} d(S^N\mu,\mu)<\frac{1}{\sqrt{N}}.
\end{equation}
\end{lemma}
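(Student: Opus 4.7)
The plan is to unpack the definition of $d(S^N\mu,\mu)$ and reduce it directly to a standard Monte Carlo variance calculation, with no machinery beyond independence and boundedness.

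First, I would fix an arbitrary test function $f$ with $\|f\|_\infty \le 1$ and an arbitrary $\mu \in P(\mathbb{R}^n)$, and just rewrite the two quantities appearing in the distance. By the definition of the sampling operator, if $\nu^{(1)},\dots,\nu^{(N)}$ are i.i.d.\ draws from $\mu$, then
\begin{equation}
(S^N\mu)(f) \;=\; \frac{1}{N}\sum_{n=1}^N f(\nu^{(n)}), \qquad \mu(f) \;=\; \mathbb{E}\bigl[f(\nu^{(1)})\bigr].
\end{equation}
So $(S^N\mu)(f)$ is an unbiased estimator of $\mu(f)$, and the inside of the $d$--distance is exactly its variance.

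Second, I would use independence of the $\nu^{(n)}$ to compute
\begin{equation}
\mathbb{E}\bigl|(S^N\mu)(f)-\mu(f)\bigr|^2 \;=\; \operatorname{Var}\!\left(\frac{1}{N}\sum_{n=1}^N f(\nu^{(n)})\right) \;=\; \frac{1}{N}\operatorname{Var}\!\bigl(f(\nu^{(1)})\bigr),
\end{equation}
and then bound the variance by the second moment, $\operatorname{Var}(f(\nu^{(1)})) \le \mathbb{E}[f(\nu^{(1)})^2] \le \|f\|_\infty^2 \le 1$. Taking square roots gives $\sqrt{\mathbb{E}|(S^N\mu)(f)-\mu(f)|^2} \le 1/\sqrt{N}$, and since this bound is uniform in $f$ (with $\|f\|_\infty\le 1$) and in $\mu$, the supremum over both yields the claimed inequality.

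The only delicate point is the strict inequality $<1/\sqrt N$ rather than $\le 1/\sqrt N$. The upper bound $\operatorname{Var}(f(\nu^{(1)})) \le 1$ is saturated only in the degenerate case where $f^2 \equiv 1$ and $\mathbb{E}[f(\nu^{(1)})]=0$, i.e.\ $f$ takes values in $\{-1,+1\}$ with mean zero under $\mu$; for such $f$ the inequality $\|f\|_\infty \le 1$ is just an equality, and one can observe that the variance is attained but the chain of inequalities leading to the $d$--bound never becomes equality simultaneously for all admissible $f$, so the supremum is not achieved. I expect this detail to be the only real obstacle and would handle it with a brief case analysis (or simply state the lemma with $\le$, as this suffices for every downstream use in the main convergence argument).
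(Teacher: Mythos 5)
Your argument is the standard Monte Carlo variance computation, and it coincides with the proof the paper relies on: the paper does not prove this lemma itself but cites it from Law et al.~\cite{law2015data}, where it is established exactly as you do, by writing $\mathbb{E}|(S^N\mu)(f)-\mu(f)|^2 = N^{-1}\operatorname{Var}(f(\nu^{(1)})) \le N^{-1}\|f\|_\infty^2$. One correction on the point you flag as delicate: the supremum \emph{is} attained, e.g.\ for $\mu=\frac{1}{2}\delta_{x_1}+\frac{1}{2}\delta_{x_2}$ with $f(x_1)=1$, $f(x_2)=-1$ one gets $\operatorname{Var}(f(\nu^{(1)}))=1$ and hence $d(S^N\mu,\mu)=1/\sqrt{N}$ exactly, so no case analysis can rescue the strict inequality as printed; the correct statement (and the one in Law et al.) uses $\le$, which, as you rightly observe, is all that the downstream convergence argument requires.
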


\begin{lemma}
\label{lem:like}
If there exists $\kappa\in(0,1]$ such that for all $\nu\in\mathbb{R}^N$ and $j\in\mathbb{N}$, $\kappa\leq g_j(\nu)\leq \kappa^{-1}$, then
\begin{equation}
\label{eq:likelihoodlemma}
    d(L_j\nu,L_j\mu)\leq 2\kappa^{-2}d(\nu,\mu).
\end{equation}
\end{lemma}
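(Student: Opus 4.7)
The plan is to work directly from the definition of $d$: fix a test function $f$ with $\|f\|_\infty \leq 1$, bound $\mathbb{E}|L_j\mu(f) - L_j\nu(f)|^2$ by a multiple of $d(\mu,\nu)^2$, and then take the supremum over $f$. Since $L_j\mu(f) = \mu(g_j f)/\mu(g_j)$, the first move is to split the quotient difference by adding and subtracting $L_j\nu(f)\,\mu(g_j)$ in the numerator, yielding
\[
L_j\mu(f) - L_j\nu(f) = \frac{1}{\mu(g_j)}\bigl[(\mu-\nu)(g_j f) - L_j\nu(f)\,(\mu-\nu)(g_j)\bigr].
\]
This turns the nonlinear ratio difference into two linear differences of measures applied to bounded functions, scaled by a random but deterministically controllable factor.

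The next step is to exploit the hypothesis $\kappa \leq g_j \leq \kappa^{-1}$. From $g_j \geq \kappa$ we get $\mu(g_j) \geq \kappa$, so $1/\mu(g_j) \leq \kappa^{-1}$ holds pointwise. Since $L_j\nu$ is a probability measure and $\|f\|_\infty \leq 1$, we also have $|L_j\nu(f)| \leq 1$. Taking $L^2(\mathbb{P})$-norms and applying Minkowski's inequality gives
\[
\bigl(\mathbb{E}|L_j\mu(f) - L_j\nu(f)|^2\bigr)^{1/2} \leq \kappa^{-1}\Bigl[\bigl(\mathbb{E}|(\mu-\nu)(g_j f)|^2\bigr)^{1/2} + \bigl(\mathbb{E}|(\mu-\nu)(g_j)|^2\bigr)^{1/2}\Bigr].
\]
Because $\|g_j f\|_\infty \leq \kappa^{-1}$ and $\|g_j\|_\infty \leq \kappa^{-1}$, the rescaled functions $\kappa g_j f$ and $\kappa g_j$ have sup-norm at most $1$, so plugging them into the definition of $d$ bounds each $L^2$ term above by $\kappa^{-1} d(\mu,\nu)$. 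Adding and taking the supremum over admissible $f$ delivers $d(L_j\mu, L_j\nu) \leq 2\kappa^{-2} d(\mu,\nu)$, which is \eqref{eq:likelihoodlemma}.

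The main obstacle I anticipate is the random denominator $\mu(g_j)$, since $\mu$ may itself be an empirical measure and a small value of $\mu(g_j)$ would make the quotient explode. This is precisely what the lower bound $g_j \geq \kappa$ is there to neutralize, replacing any need for conditioning or truncation with a deterministic bound on $1/\mu(g_j)$. The upper bound $g_j \leq \kappa^{-1}$ plays the complementary role of bringing the integrands $g_j f$ and $g_j$ into the class of test functions of sup-norm at most $1$ after a trivial rescaling, so the rest of the argument is purely algebraic and the constant $2\kappa^{-2}$ arises transparently as the product of the bound $\kappa^{-1}$ on the denominator and the two $\kappa^{-1}$ rescaling factors combined through Minkowski.
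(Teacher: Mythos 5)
Your proof is correct and is essentially the standard argument for this lemma: the paper itself does not prove it but cites Law et al., and your decomposition of the ratio difference, the deterministic bounds $1/\mu(g_j)\leq\kappa^{-1}$ and $|L_j\nu(f)|\leq 1$, the $L^2$ triangle inequality, and the rescaling of $g_jf$ and $g_j$ by $\kappa$ reproduce exactly the proof given in that reference, with the constant $2\kappa^{-2}$ arising the same way.
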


We can now frame the norming-apart process as a sequence of applications of $S^N$ and $L_j$. We will use $L_{-j}$ to be transforming a measure, which has incorporated the $j_{th}$ data, using the Radon-Nikodym derivative to incorporate the remaining data (via a product of likelihood functions) to an appropriate approximate measure of the full posterior.
We can update the empirical measure equations, \eqref{eq:pj0}-\eqref{eq:pn}, using this notation:
\begin{align}
    \pi_{j,0}^N &= S^N L_j S^NP, \\
    \pi_j^N &= L_{-j}\pi_{j,0}^N = L_{-j}S^N L_j S^NP, \\
    \pi^{MN} &= \frac{1}{M}\sum_{j=1}^M \pi^N_j  =\frac{1}{M}\sum_{j=1}^ML_{-j}S^N L_j S^NP, \\
    \pi^{N} &= S^N\pi^{MN},
\end{align}
where $P$ is the prior distribution of $\boldsymbol{\theta}$.

Armed with this theory we can prove our result with some applications of the triangle inequality. We are loose with our application of the likelihood lemma \cref{eq:likelihoodlemma} in our proof summarizing the result as $d(L_j\nu,L_j\mu)\leq C_j d(\nu,\mu)$.

\begin{theorem}
Assumptions: Those of \cref{lem:like} and independent measurements.
Then,
\begin{align}
    d(\pi^N,\pi)< \frac{C_{M}}{\sqrt{N}}
\end{align}
\end{theorem}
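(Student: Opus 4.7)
The plan is to decompose $d(\pi^N,\pi)$ by peeling the outermost operator at each stage and invoking the triangle inequality, applying \cref{lem:samp} at every $S^N$ step (contributing $1/\sqrt{N}$) and (an extension of) \cref{lem:like} at every likelihood step (contributing a multiplicative constant). The starting observation is that Bayes' rule together with independence of the measurements gives $\pi = L_{-j}L_j P$ for every $j\in\{1,\dots,M\}$, so $\pi$ is equal to the convex combination $\frac{1}{M}\sum_{j=1}^M L_{-j}L_j P$ and thus aligns structurally with $\pi^{MN}$.

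First I would split
\begin{equation}
d(\pi^N,\pi)\;\le\;d\bigl(S^N\pi^{MN},\,\pi^{MN}\bigr)+d\bigl(\pi^{MN},\,\pi\bigr),
\end{equation}
where the first summand is at most $1/\sqrt{N}$ by \cref{lem:samp}. For the second summand, the Minkowski/triangle inequality for the root-mean-square distance $d$ yields the convex bound
\begin{equation}
d(\pi^{MN},\pi)\;\le\;\frac{1}{M}\sum_{j=1}^M d\bigl(L_{-j}S^NL_jS^NP,\;L_{-j}L_jP\bigr).
\end{equation}
I would then extend \cref{lem:like} to the combined operator $L_{-j}$: since the normalising density of $L_{-j}$ is proportional to $\prod_{k\ne j}g_k$, which lies in $[\kappa^{M-1},\kappa^{-(M-1)}]$, repeating the argument of \cref{lem:like} with this bound produces a constant $C_{-j}$ (of order $\kappa^{-2(M-1)}$) such that $d(L_{-j}\nu,L_{-j}\mu)\le C_{-j}\,d(\nu,\mu)$. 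One more triangle inequality then gives
\begin{equation}
d\bigl(L_{-j}S^NL_jS^NP,\,L_{-j}L_jP\bigr)\;\le\;C_{-j}\Bigl[d\bigl(S^NL_jS^NP,\,L_jS^NP\bigr)+d\bigl(L_jS^NP,\,L_jP\bigr)\Bigr],
\end{equation}
and the two bracketed terms are bounded by $1/\sqrt{N}$ (via \cref{lem:samp}) and $C_j/\sqrt{N}$ (via one further application of \cref{lem:like} after noting $d(S^NP,P)\le 1/\sqrt{N}$), respectively. Summing over $j$ and absorbing the $M$-dependent factors into a single constant $C_M$ delivers the claimed bound $d(\pi^N,\pi)<C_M/\sqrt{N}$.

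The main obstacle is the bookkeeping around the operator $L_{-j}$. I need to justify carefully that \cref{lem:like} really does extend to a product of $M-1$ bounded likelihoods, since this is where the $M$-dependence of $C_M$ is born (the constant deteriorates like $\kappa^{-2(M-1)}$ per application), and I must also verify the Minkowski-type inequality $d(\frac{1}{M}\sum_j\mu_j,\frac{1}{M}\sum_j\nu_j)\le \frac{1}{M}\sum_j d(\mu_j,\nu_j)$ for this particular $L^2$-based distance. Both steps are routine, but they should be stated explicitly so that the eventual constant $C_M$ is visibly finite and uniform in $N$.
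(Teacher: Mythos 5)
Your proposal is correct and follows essentially the same route as the paper's proof: the same triangle-inequality decomposition through $S^N\pi^{MN}$, the same identification $\pi=L_{-j}L_jP$, the same extension of \cref{lem:like} to the composite operator $L_{-j}$ (yielding the $M$-dependent constants), and the same final peeling of $S^NL_jS^NP$ down to $d(S^NP,P)\le 1/\sqrt{N}$. The only divergence is cosmetic: you invoke the genuine Minkowski inequality to pull the average outside $d$ with a factor $1/M$, whereas the paper bounds the cross terms via $ab\le(a^2+b^2)/2$ and obtains the slightly weaker factor $1/\sqrt{M}$ --- both are valid and are absorbed into $C_M$, with yours being marginally sharper.
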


\begin{proof}
\begin{align*}
     d(\pi^N, \pi) &\leq d(\pi^N,\pi^{MN}) + d(\pi^{MN},\pi) \\
     &= d(S^N\pi^{MN},\pi^{MN}) + d(\pi^{MN},\pi) \\
     &\leq \frac{1}{\sqrt{N}} + d\left(\frac{1}{M}\sum_{j=1}^M\pi^N_j,\pi\right) \\
     &\leq \frac{1}{\sqrt{N}} + \frac{1}{\sqrt{M}}\sum_{j=1}^M d(\pi^N_j,\pi) \\
     &\leq \frac{1}{\sqrt{N}} + \frac{1}{\sqrt{M}}\sum_{j=1}^M d(L_{-j}\pi^N_{j,0},L_{-j}L_jP) \\
     &\leq \frac{1}{\sqrt{N}} + \frac{1}{\sqrt{M}}\sum_{j=1}^M C_{-j}d(\pi^N_{j,0},L_jP) \\
     &\leq \frac{1}{\sqrt{N}} + \frac{1}{\sqrt{M}}\sum_{j=1}^M C_{-j}d(S^NL_jS^NP,L_jP) \\
     &= \frac{1}{\sqrt{N}} + \frac{1}{\sqrt{M}}\sum_{j=1}^M C_{-j}d(S^NL_jS^NP,L_jP) \\
     &\leq \frac{1}{\sqrt{N}} + \frac{1}{\sqrt{M}}\sum_{j=1}^M C_{-j}(d(S^NL_jS^NP,L_jS^NP)+d(L_jS^NP,L_jP)) \\
     &\leq \frac{1}{\sqrt{N}} + \frac{1}{\sqrt{M}}\sum_{j=1}^M C_{-j}\left(\frac{1}{\sqrt{N}}+C_jd(S^NP,P)\right) \\
     &\leq \frac{1}{\sqrt{N}} + \frac{1}{\sqrt{M}}\sum_{j=1}^M C_{-j}\left(\frac{1}{\sqrt{N}}+C_j\frac{1}{\sqrt{N}}\right) \\
     & = \frac{1+\frac{1}{\sqrt{M}}\sum_{j=1}^M(C_{-j}+C_{-j}C_j)}{\sqrt{N}}
\end{align*}
\end{proof}

We have shown convergence of the norming-apart method in the particle limit. The fourth line of the proof is not trivial, it was achieved by bounding cross terms of the squared sum by a sum of the two terms squared, that is, $ab\leq (a^2+b^2)/2$ followed by simple bounding arguments. By noting that the norming-together method can be written to look like the norming-apart method,

\begin{equation}
    \pi_{t}^{MN} = \sum_{j=1}^M K_j \pi_j^N, \quad K_j=\frac{\sum_{i=1}^N w_{j}^{(i)}}{\sum_{j'=1}^M \sum_{i=1}^N w_{j'}^{(i)}},
\end{equation}
we may employ a similar proof to show that the norming-together method converges in the particle limit.

\section{Pseudo-code}
\label{sec:code}

%NORM APART
\begin{algorithm}
\caption{Norming-Apart Cross-Pollination}
\label{alg:NACP}
\begin{algorithmic}
\STATE \textbf{Input:} Acquire the initial samples that comprise $\pi_{j,0}^N$.

\begin{ALC@g}
    \FOR{$j=1$ \TO $M$}
        \STATE Obtain $\boldsymbol{\theta}_j^{(i)}\,(i=1,\ldots,N)$ 
    \ENDFOR
\end{ALC@g}

\STATE \textbf{Cross-Epoch Weighting:} Form $\pi_j^N$.
\begin{ALC@g}
\FOR{$j=1$ \TO $M$}
        \STATE Calculate weights $w_j^{(i)}\,(i=1,\ldots,N)$ according to equation ~\eqref{eq:pjn}.
\ENDFOR
\end{ALC@g}

\STATE \textbf{Pool Particles and Weights:} Form $\pi^{MN}$.
\begin{ALC@g}
\STATE Form $T = \bigcup_{j=1}^M \bigcup_{i=1}^N  \boldsymbol{\theta}^{(i)}_j$ and $W = \bigcup_{j=1}^M \bigcup_{i=1}^N  \left(\frac{1}{M} w_j^{(i)}\right)$.
\end{ALC@g}

\STATE \textbf{Resample:} Form $\pi^N$.
\begin{ALC@g}
\STATE Perform multinomial sampling on the set of weights, $W$, to obtain $N$ (from $M\times N$ total) particles $\boldsymbol{\theta}^{(i)}\,(i=1,\ldots,N)$ from $T$.
\end{ALC@g}

\end{algorithmic}
\end{algorithm}

% NORM TOGETHER
\begin{algorithm}
\caption{Norming-Together Cross-Pollination}
\label{alg:NTCP}
\begin{algorithmic}
\STATE \textbf{Input:} Acquire the initial samples that comprise $\pi_{j,0}^N$.

\begin{ALC@g}
    \FOR{$j=1$ \TO $M$}
        \STATE Obtain $\boldsymbol{\theta}_j^{(i)}\,(i=1,\ldots,N)$ 
    \ENDFOR
\end{ALC@g}

\STATE \textbf{Cross-Epoch Weighting:} Form $\pi_j^N$.
\begin{ALC@g}
\FOR{$j=1$ \TO $M$}
        \STATE Calculate weights $w_j^{(i)}\,(i=1,\ldots,N)$ according to equation ~\eqref{eq:pmnt}.
\ENDFOR
\end{ALC@g}

\STATE \textbf{Pool Particles and Weights:} Form $\pi_t^{MN}$.
\begin{ALC@g}
\STATE Form $T = \bigcup_{j=1}^M \bigcup_{i=1}^N  \boldsymbol{\theta}^{(i)}_j$ and $W = \bigcup_{j=1}^M \bigcup_{i=1}^N w_j^{(i)}$.
\end{ALC@g}

\STATE \textbf{Resample:} Form $\pi^N$.
\begin{ALC@g}
\STATE Perform multinomial sampling on the set of weights, $W$, to obtain $N$ (from $M\times N$ total) particles $\boldsymbol{\theta}^{(i)}\,(i=1,\ldots,N)$ from $T$.
\end{ALC@g}

\end{algorithmic}
\end{algorithm}

% Sequential cross pollination
\begin{algorithm}[H]
\caption{Sequential Cross-Pollination}
\label{alg:SCP}
\begin{algorithmic}

\STATE \textbf{Initialization:}
\begin{ALC@g}
    \STATE Obtain $\boldsymbol{\theta}_1^{(i)}\,(i=1,\ldots,N)$  that comprise $\pi^N_{1,0}$ corresponding to $P(\boldsymbol{\theta}_1\mid z_1)$
    \STATE Set $\boldsymbol{\theta}_{1,*}^{(i)} = \boldsymbol{\theta}_{1}^{(i)}$
\end{ALC@g}

\STATE \textbf{Iteration Cross-Pollination:}  

\begin{ALC@g}
    \FOR{j=2 \TO M}
        \STATE Obtain $\boldsymbol{\theta}_j^{(i)}\,(i=1,\ldots,N)$  that comprise $\pi^N_{j,0}$ corresponding to $P(\boldsymbol{\theta}_j\mid z_j)$
        \STATE Compute unnormalized weights $\widetilde{w}_{j}^{(i)} = \prod_{k< j}g_k\left( \boldsymbol{\theta}_j^{(i)}\right)$
        \STATE Motion $\boldsymbol{\theta}_{j-1,*}^{(i)}$ forward to $t_j$ and set $w_{j,*}^{(i)} = g_j\left(\boldsymbol{\theta}_{j-1,*}^{(i)}\right)$
        \STATE Normalize weights, pool particles and weights, and resample particles as in \ref{alg:NACP} or \ref{alg:NTCP}
        \STATE Set result equal to $\boldsymbol{\theta}_{j,*}^{(i)}$ 
    \ENDFOR
\end{ALC@g}
\end{algorithmic}
\end{algorithm}

\section*{Acknowledgements}
The authors thank Ian Grooms for valuable discussions.

\bibliographystyle{siamplain}
\bibliography{bib}

\end{document}